\newcommand{\ket} [1] {\vert #1 \rangle}
\newcommand{\bra} [1] {\langle #1 \vert}
\newcommand{\braket}[2]{\langle #1 | #2 \rangle}
\newcommand{\ii}{\mathrm{i}}
\theoremstyle{plain}
\newtheorem{theorem}{Theorem}
\newtheorem{lemma}{Lemma}
\theoremstyle{definition}
 \definecolor{BLACK}{gray}{0}
 \definecolor{WHITE}{gray}{1}
 \definecolor{RED}{rgb}{1,0,0}
 \definecolor{GREEN}{rgb}{0,.4,0}
 \definecolor{BLUE}{rgb}{0,0,1}
 \definecolor{CYAN}{cmyk}{1,0,0,0}
 \definecolor{MAGENTA}{cmyk}{0,1,0,0}
 \definecolor{YELLOW}{cmyk}{0,0,1,0}
\definecolor{Pr}{rgb}{0.4,0.3,0.9}
\begin{document}

\title{Bayesian Deep Learning on a Quantum Computer}

\author{Zhikuan Zhao}
\affiliation{Department of Computer Science, ETH Zurich, Universit\"atstrasse 6, 8092 Z\"urich, Switzerland}
\affiliation{Singapore University of Technology and Design, 8 Somapah Road, Singapore 487372}
\affiliation{Centre for Quantum Technologies, National University of Singapore, 3 Science Drive 2, Singapore 117543}

\author{Alejandro Pozas-Kerstjens}
\affiliation{ICFO-Institut de Ciencies Fotoniques, The Barcelona Institute of Science and Technology, 08860 Castelldefels (Barcelona), Spain}

\author{Patrick Rebentrost}
\affiliation{Centre for Quantum Technologies, National University of Singapore, 3 Science Drive 2, Singapore 117543}

\author{Peter Wittek}
\affiliation{Rotman School of Management, University of Toronto, M5S 3E6 Toronto, Canada}
\affiliation{Creative Destruction Lab, M5S 3E6 Toronto, Canada}
\affiliation{Vector Institute for Artificial Intelligence, M5G 1M1 Toronto, Canada}
\affiliation{Perimeter Institute for Theoretical Physics, N2L 2Y5 Waterloo, Canada}

\begin{abstract}
    Bayesian methods in machine learning, such as Gaussian processes, have great advantages compared to other techniques. In particular, they provide estimates of the uncertainty associated with a prediction.
    Extending the Bayesian approach to deep architectures has remained a major challenge.
    Recent results connected deep feedforward neural networks with Gaussian processes, allowing training without backpropagation.
    This connection enables us to leverage a quantum algorithm designed for Gaussian processes and develop a new algorithm for Bayesian deep learning on quantum computers.
    The properties of the kernel matrix in the Gaussian process ensure the efficient execution of the core component of the protocol, quantum matrix inversion, providing an at least polynomial speedup over classical algorithms. Furthermore, we demonstrate the execution of the algorithm on contemporary quantum computers and analyze its robustness with respect to realistic noise models.
\end{abstract}

\maketitle

\section{Introduction}
The Bayesian approach to machine learning provides a clear advantage over traditional techniques: namely, it provides information about the uncertainty in their predictions. But not only that, they have further advantages, including automated ways of learning structure and avoiding overfitting, a principled foundation~\cite{ghahramani2015probabilistic}, and robustness to adversarial attacks~\cite{bradshaw2017adversarial,grosse2017how}.
The Bayesian framework has been making advances in various deep architectures~\cite{blundell2015weight,gal2016dropout}.
Some recent advances made a connection between a quintessentially Bayesian model, Gaussian processes (GPs)~\cite{rasmussen2006gaussian}, and deep feedforward neural networks~\cite{lee2018deep,gmatthews2018gaussian}.

Parallel to these developments, quantum technologies have been making advances in machine learning.
A new breed of quantum neural networks is aimed at current and near-future quantum computers~\cite{verdon2017quantum,torrontegui2018universal,khoshaman2018quantum,farhi2018classification,verdon2018universal} , which is in stark contrast with attempts in the past~\cite{schuld2014quest}.
Some constraints must be observed that are unusual in classical machine learning algorithms.
In particular, the protocol must be coherent, that is, we require from a quantum machine learning algorithm that it is described by a unitary map that maps input nodes to output nodes.
While the common wisdom is that a nonlinear activation is a necessary component in neural networks, a linear, unitary mapping between the inputs and outputs actually reduces the vanishing gradient problem~\cite{arjovsky2015unitary,hyland2016learning}.
Training a hierarchical representation in a unitary fashion is also possible on classical computers~\cite{liu2017machine,stoudenmire2017learning}.
So while this constraint is unusual, it is not entirely unheard of in classical machine learning, and it is the most common setting in quantum-enhanced machine learning~\cite{biamonte2017quantum}.
Furthermore, the description of quantum mechanics uses complex numbers and some promising results in machine learning show advantages of using these over real numbers~\cite{trabelsi2017deep}.

In this paper, we exploit the connection between deep learning and Gaussian processes and rely on a quantum-enhanced protocol for the latter~\cite{zhao2015quantum} to develop new algorithms that perform quantum Bayesian training of deep neural networks.
We implement the core of the algorithm on both the Rigetti Forest~\cite{smith2016practical} and the IBM QISKit~\cite{cross2017open} software stacks, and analyze how noise affects the success of the calculations on both quantum simulators.
To run on real quantum processing units, we implement a simplified, shallow-circuit version of the protocol, and compare the outcome with the simulations.
The source code is available under an open source license\footnote{\url{https://gitlab.com/apozas/bayesian-dl-quantum/}}.

\section{Background}
The algorithm that we present makes use of two previous results, which we now briefly review: the connection between deep neural networks and Gaussian processes (Section \ref{gpdl}), and the quantum Gaussian process protocol (Section \ref{sub: QGP}).

\subsection{Gaussian processes and deep learning}
\label{gpdl}
The correspondence between Gaussian processes and a neural network with a single hidden layer is well-known~\cite{neal1994priors}. Let $z(x)\in \mathbb{R}^{d_{out}}$ denote the output with input $x\in \mathbb{R}^{d_{in}}$, with $z_i(x)$ denoting the $i^{th}$ component of the output layer.
If the weight and bias parameters are taken to be i.i.d., each $z_i$ will be a sum of i.i.d. terms.
If the hidden layer has an infinite width, the Central Limit Theorem implies that $z_i$ follows a Gaussian distribution.
Now let us consider a set of $k$ input data points, with corresponding outputs $\{z_i(x^{[1]}), z_i(x^{[2]}),\ldots z_i(x^{[k]})\}$.
Any finite collection of the set will follow a joint multivariate Gaussian distribution.
Therefore $z_i$ corresponds to a Guassian process, $z_i\sim \mathcal{GP}(\mu,K)$.
Conventionally, the parameters are chosen to have zero mean, so the mean of the GP, $\mu$, is equal to zero.
The covariance matrix $K$ is given by $K(x,x^\prime)=\mathbb{E}[z_i(x)z_i(x^\prime)]$.

The Bayesian training of the neural network then corresponds to computing the posterior distribution of the given GP model, that is, calculating the mean and variance of the predictive distribution from inverting the covariance matrix.
Choosing the GP prior amounts to the selection of the covariance function and tuning the corresponding hyperparameters.
These include the information of the neural network model class, depth, nonlinearity, and weight and bias initializations.

This argument is generalized to a deep neural network architecture in a recursive manner~\cite{lee2018deep,gmatthews2018gaussian}.
Let $z_i^l$ denote the $i^{th}$ component of the output of the $l^{th}$ layer. By induction it follows that $z_i^l\sim\mathcal{GP}(0,K^l)$.
The covariance matrix on the $l^{th}$ layer is given by \mbox{$K^l(x,x^\prime)=\mathbb{E}[z_i^l(x)z_i^l(x^\prime)]$}.
To explicitly compute $K^l(x,x^\prime)$, we need to specify the variance on the weight and bias parameters, $\sigma_w^2$ and $\sigma_b^2$, as well as the nonlinearity $\phi$.
In a single-line recursive formula, this reads as
\begin{equation}
    K^l(x,x^\prime)=\sigma_b^2+\sigma_w^2\mathbb{E}[\phi(z_i^{l-1}(x))\phi(z_i^{l-1}(x^\prime))],
    \label{eq:recursion}
\end{equation}
where $z_i^{l-1}\sim\mathcal{GP}(0,K^{l-1})$. The base case of the induction is given by \mbox{$K^0(x,x^\prime)=\sigma_b^2+\sigma_w^2\left(\frac{x.x^\prime}{d_{in}}\right)$}.

Remarkably, numerical experiments suggest that the infinite-width neural network trained with Gaussian priors outperforms finite deep neural networks trained with stochastic gradient descent in many cases \cite{lee2018deep,gmatthews2018gaussian}.

\subsection{Quantum Gaussian process algorithm}
\label{sub: QGP}
A quantum algorithm for Gaussian process regression was introduced in Ref.~\cite{zhao2015quantum}.
Given a supervised learning problem with a training dataset with input points $\{x_i\}_{i=0}^{n-1}$ and corresponding output points $\{y_i\}_{i=0}^{n-1}$, the quantum GP algorithm leverages the quantum linear system subroutine introduced in Ref.~\cite{Harrow2009a}, and computes a GP model's mean predictor,
\begin{align}
\bar{{f}_*}=k_*^T(K+\sigma_n^2I)^{-1}y
\label{eq:meanpred}
\end{align}
and variance predictor,
\begin{align}
\mathbb{V}[{f}_*]=k\left(x_*,x_*\right)-k_*^T(K+\sigma_n^2I)^{-1}k_*.
\label{eq:variancepred}
\end{align}
Here $(K+\sigma_n^2I)$ denotes the model's covariance matrix with Gaussian noise entries of variance $\sigma_n^2$, and $k_*$ denotes the row in the covariance matrix that corresponds to the target point for prediction.
The scalar $k\left(x_*,x_*\right)$ is the covariance function of the target point with itself, and takes only a constant time to compute.

Assuming a black-box access to the matrix elements of $K$, the quantum GP algorithm simulates $(K+\sigma_n^2I)$ as a Hamiltonian acting on an input state, $\ket{b}$, performs quantum phase estimation~\cite{Kitaev1995} to extract estimates of the eigenvalues of $(K+\sigma_n^2I)$, and stores them in a quantum register as a weighted superposition.
While in superposition, the stored eigenvalues are inverted and used to construct a controlled rotation on an ancillary system.
Conditioned on a final measurement result on the ancillary system, the algorithm probabilistically completes a computation for $(K+\sigma_n^2I)^{-1}\ket{b}$.
Depending on whether the aim is computing the mean predictor or the variance predictor, one chooses $\ket{b}=\ket{y}$ or $\ket{b}=\ket{k_*}$, which encodes the classical vectors $y$ or $k_*$ respectively. Finally applying a quantum inner product routine, such as those described in Refs. \cite{tacchino2018,schuld2018FH}, allows for a good estimation of the quantities $k_*^T(K+\sigma_n^2I)^{-1}y$ and $k_*^T(K+\sigma_n^2I)^{-1}k_*$, which leads to the goal of a GP regression model computation.

The quantum GP algorithm runs in $\tilde{\mathcal{O}}(\log(n))$ time when $K$ is sparse and well-conditioned.
A caveat here is that the quantum algorithm only runs in logarithmic time for sparse covariance matrices, and this could restrict the form of the non-linear function or other parameters in the network architecture.
The simulation of sparse Hamiltonians is more efficient when using quantum computers~\cite{lloyd1996universal,childs2010relationship,berry2012black}.
This can be addressed by tapering the covariance function using a compactly supported function~\cite{furrer2006covariance}; a similar methodology is also known in kernel methods~\cite{wittek2011compact}.
Furthermore, one could apply the methods in Ref.~\cite{wossnig2017quantum} to construct a $\mathcal{O}(\sqrt{n})$ time algorithm for Gaussian processes.
This should ensure at least a polynomial quantum speedup for general constructions. Subsequently to the quantum GP algorithm, a corresponding quantum method for enhancing the training and model selection of GPs was introduced in Ref.~\cite{zhao2018quantum}.

\section{Quantum Bayesian training of neural networks}\label{sec:algorithm}
Now, we leverage the previous two results to develop a way of conducting Bayesian training of deep neural networks using a Gaussian prior.

According to the connection described in Section~\ref{gpdl}, Bayesian training of a deep neural network of $L$ layers requires sampling the values of the neurons in the final layer from the Gaussian process $\mathcal{GP}(0,K^L)$, where $K^L$ can be computed in a recursive manner beginning from $K^0$ following Eq.~\eqref{eq:recursion}.
If we had classical access to the elements of the covariance matrix $K^0$, one possibility could be to classically compute $K^L$ and then resort to the simulation of the Hamiltonian evolution generated by $K^L$ to obtain the mean predictor $\bar{{f}_*}$ and variance predictor $\mathbb{V}[{f}_*]$ needed in the quantum Gaussian process algorithm of Section~\ref{sub: QGP}~\cite{zhao2015quantum}. This procedure would require simulating the Hamiltonian evolution from a classical encoding of $K^L$, which may hinder the speedup expected from the algorithm in this case.

The algorithm we propose makes use of the following observation: for the quantum Gaussian process algorithm there is no need to have a complete knowledge of the covariance matrix. In reality, one just needs to know the time evolution operator under the covariance matrix encoded as a Hamiltonian.
We propose a way of constructing such time evolution operator given access to a quantum encoding of the base case covariance matrix $K^0$, either in the form of oracular access or encoded as a density matrix of a qubit system (we discuss both possibilities later in this section).
Once the time evolution operator is simulated, our algorithm, as the quantum Gaussian process algorithm, needs sampling from only one Gaussian process, that corresponding to the last layer in the network.

A requirement of the algorithm is, as in the classical case, a functional expression of the covariance matrix in the last layer in terms of the base case $K^0$.
For general non-linear activation functions, this can only be done with numerical integration, which seems quite unreachable to implement coherently with contemporary quantum computers.
A complete quantum protocol would require a large number of qubits and at least polynomial-size quantum circuits, which remains out of reach with current technology.
However, different works showed activation functions which yield kernels and recursion relations that can be analytically calculated or approximated~\cite{cho2009kernel,daniely2016toward}.
A particularly useful special case amounts to using only the ReLU non-linear activation on every layer.
The ReLU activation function is $\phi(x)=\max(0,x)$, and has been crucial in addressing issues such as the vanishing gradient problem in deep learning~\cite{glorot2011relu}.
For this case, the $l^{th}$ layer covariance matrix has an analytical formula~\cite{lee2018deep}:
\begin{align}
K^l(x,x^\prime)=&\,\sigma_b^2+\frac{\sigma_w^2}{2\pi}\sqrt{K^{l-1}(x^\prime,x^\prime)K^{l-1}(x,x)}\notag\\
&\times\!\left[\arcsin(\theta^{l-1}_{x,x^{\prime}})\!-\!(\pi-\theta^{l-1}_{x,x^{\prime}})\arccos(\theta^{l-1}_{x,x^{\prime}})\right]
\label{Relu},
\end{align}
where $$\theta^{l}_{x,x^{\prime}}=\arccos\left(\frac{K^{l}(x,x^\prime)}{\sqrt{K^{l}(x,x)K^{l}(x^\prime,x^\prime)}}\right).$$

The non-linear functions featured in Eq.~\eqref{Relu} can be approximated by polynomial series with some convergence conditions. The factor $K^{l}(x,x)K^{l}(x^\prime,x^\prime)$ represents outer products between the two identical vectors of diagonal entries in $K^{l}$. As such, the computation of Eq.~\eqref{Relu} can be decomposed into such outer product operations combined with element-wise matrix multiplication. In Sections~\ref{sub:multilayer} and~\ref{sub: Element-wise} we provide a construction for simulating the evolution under the Hamiltonians generated by these operations on the matrix elements of a quantum state.

For the remaining discussion, we briefly introduce the mathematical formalism of quantum computing.
In particular, a ket $\ket{x}$ denotes a column vector $x\in\mathbb{C}^d$ for some dimension $d$, with norm 1.
Its complex conjugate is a bra $\bra{x}$.
A ket represents a pure quantum state.
A quantum computer essentially transforms quantum states into quantum states, and the result of the quantum computation is a quantum state with some desired properties.
The density matrix of a pure state is the outer product of ket and the corresponding bra, and it is a positive semidefinite matrix with trace 1.
For pure states, the density matrix is an equivalent way of describing a quantum state.
In addition, the density matrix allows to describe mixed quantum states, i.e.~statistical ensembles of pure states.
For the algorithm proposed here it is needed that $K^0$ is given as a real symmetric, positive semi-definite matrix, normalized by its trace in order to qualify as a quantum state~\cite{rebentrost2014quantum}.
All but the last property are satisfied by the definition of covariance matrix, and the last one can be achieved with an appropriate rescaling, equivalent to an appropriate choice of the kernel function.
For more details on quantum computations, we refer the reader to Ref.~\cite{nielsen2000quantum}.

As introduced above, the quantum algorithms used in the present work can admit two data-input models. First, we can assume efficient computability or oracular access to the matrix elements of the covariance matrix $K^0$. In this model, the quantum simulation methods of~\cite{berry2012black,berry2015hamiltonian} can be used in the quantum GP algorithm, as long as the assumptions of these methods are satisfied. Second, we can assume that the covariance matrix is presented as the quantum density matrix of a qubit system. Multiple copies of such a density matrix allow the use of a method inspired by the quantum principal component analysis algorithm~\cite{rebentrost2014quantum}. We discuss the first method for the single-layer case and the second method for the
multiple-layer case.

\subsection{Single-layer case}

Assume that we are given oracle access to the matrix elements of the base case:
$$
O_{K^0} \ket{j,k} \ket{z} \to  \ket{j,k} \ket{z \oplus K^0_{jk}},
$$
where the matrix elements are written in the notation \mbox{$K^0_{jk}=K^0(x_j,x_k)$}.
The desired kernel function of Eq.~\eqref{Relu} can be implemented by oracle queries using ancillary labeling registers with $\ket{j,j}$, $\ket{k,k}$ and $\ket{j,k}$, as well as an additional register which stores the value of a classical computation step.
This procedure can be described as follows:
\begin{equation}
O_{K^0}\ket{j,j}\ket{k,k}\ket{j,k}\ket{0} \to  \ket{j,j}\ket{k,k}\ket{j,k} \ket{0 \oplus K^1_{jk}}.
\end{equation}
With the oracle access to the elements of $K^0$, the first, and final, layer covariance matrix $K^1$ can be classically computed and simulated as a Hamiltonian used in the quantum GP algorithm.

\subsection{Multi-layer case}
\label{sub:multilayer}

In the case of multi-layer network architectures, we describe a method to simulate the $l^{th}$-layer kernel matrix as a Hamiltonian.
Our approach is inspired by the quantum principal component analysis algorithm~\cite{rebentrost2014quantum} where the density matrix $\rho$ of a quantum state is treated as a Hamiltonian and used to construct the desired controlled unitary $e^{i t\rho}$ acting on a target quantum state for a time period $t$.
This is an unusual concept for classical machine learning and classical algorithms: a high-dimensional vector becomes an operator on itself to reveal its own eigenstructure.
A throughout description of this density matrix-based Hamiltonian simulation procedure is presented in Ref.~\cite{kimmel2017hamiltonian}.
Here we will first give an overall description of the quantum method, while the detailed analysis is presented later in the paper.

In order to apply density matrix-based Hamiltonian simulation using the $l^{th}$-layer kernel, we need to incorporate methods to compute certain element-wise matrix operations between two density matrices.
It is convenient to define the following:

\begin{align*}
S_1=\sum_{ j, k} |j\rangle \langle k|\otimes |j\rangle \langle k|  \otimes  |k\rangle \langle  j |,\\
S_2=\sum_{ j, k} |j\rangle \langle j|\otimes |k\rangle \langle k|  \otimes  |k\rangle \langle  j |.
\end{align*}

With an augmented density matrix exponentiation scheme, $S_1$ computes exponential of the Hadamard product of two density matrices, while $S_2$ computes the exponential of the outer product between the diagonal entries of two density matrices.
Specifically, we have
\begin{align}
{\rm tr}_{1,2} & \{ e^{- \ii S_1 \delta} ( \rho_1 \otimes \rho_2 \otimes \sigma ) e^{ \ii S_1 \delta} \} \notag\\
&=\exp[-\ii(\rho_1 \odot \rho_2)\delta]\,\sigma\exp[\ii(\rho_1 \odot \rho_2)\delta]+ \mathcal{O}(\delta^2),
\label{element-wise product}
\end{align}
where $\rho_1 \odot \rho_2$ denotes the Hadamard product between $\rho_1$ and $\rho_2$, and ${\rm tr}_{1,2}$ denotes tracing out the first and second subsystems, respectively.
The factor $\delta$ represents a small evolution time with the operator in the exponents.
We also have
\begin{align}
{\rm tr}_{1,2} &\{ e^{- \ii S_2 \delta} ( \rho_1 \otimes \rho_2 \otimes \sigma ) e^{ \ii S_2 \delta} \} \notag\\
&=\exp[-\ii(\rho_1 \oslash \rho_2)\delta]\sigma\exp[\ii(\rho_1 \oslash \rho_2)\delta]+ \mathcal{O}(\delta^2),
\label{diagonal outer product}
\end{align}
where $\rho_1 \oslash \rho_2$ denotes taking the outer product between the diagonal entries of $\rho_1$ and $\rho_2$.
The derivation of Eqs.~\eqref{element-wise product} and \eqref{diagonal outer product} are presented later in Section~\ref{sub: Element-wise}.
Both $S_1$ and $S_2$ are sparse and thus efficiently simulable as a Hamiltonian with methods based on quantum walks~\mbox{\cite{berry2012black,berry2015hamiltonian}}.
A similar method of using a modified version of the SWAP operator combined with density matrix exponentiation scheme was used in~\cite{rebentrost2018quantum} for a quantum singular value decomposition algorithm.

In order to approximately compute the non-linear function of Eq.~\eqref{Relu}, we make use of a polynomial series in $K^0(x,x^\prime)$.
Note that due to the structure of Eq.~\eqref{Relu}, the products involved in this polynomial series are the Hadamard products denoted by $\odot$, and the diagonal outer products denoted by $\oslash$.
We will denote the polynomial in $K^0$ to the order $N(l)$ which approximates the $l^{th}$ layer kernel function as $P^{N}_{(\odot,\oslash)}(K^0)$.

We note that by using a generalized $\tilde{S}$ operator which combines the components in $S_1$ and $S_2$, one can implement a total $N$ number of $\odot$ and $\oslash$ operations in arbitrary orders.
In Section~\ref{sub: Element-wise}, we will show this simply amounts to summing over the tensor product of the projectors $|j\rangle \langle j|$, $|j\rangle \langle k|$, and $|k\rangle \langle k|$.
Similar polynomial series simulation problems were addressed in Refs.~\cite{kimmel2017hamiltonian} and~\cite{rebentrost2016quantum}, but the type of product considered was standard matrix multiplication instead of element-wise operations.

The quantum technique described above combined with using the series expansions of the non-linear functions in Eq.~\eqref{Relu} gives us a way to approximate $e^{\ii tK^l}\sigma e^{-\ii tK^l}$, where $\sigma$ is an arbitrary input state.
Hence given multiple copies of a density matrix which encodes the initial layer covariance matrix, $K^{0}$, the unitary operator, $\exp(-\ii t K^l)$ can be constructed to act on an arbitrary input state, as required by applying the quantum GP algorithm described in Section~\ref{sub: QGP}.
Note that there is a subtle but crucial difference between the single-layer and the multilayer case: while in the training of single-layer networks one needs of a quantum random access memory to perform the oracle queries of the matrix elements of $K^0$, in the multilayer case we substitute this requirement by having access to multiple copies of the quantum state encoding $K^0$.
This requirement is much more feasible given current technology since the desired state preparation can be encoded in a quantum circuit and run as many times as needed.

\subsection{Coherent element-wise operations}
\label{sub: Element-wise}
In this section we give a more formal description of the quantum method for approximately compute the polynomial $P^{N}_{(\odot,\oslash)}(K^0)$. The main results needed are well summarised by the following Lemmas~\ref{lemma: hadamard} and~\ref{lemma: outer}, and Theorem~\ref{theorem: poly}.

\begin{lemma}\label{lemma: hadamard}
Given $\mathcal{O}(t^2/\epsilon)$ copies of $d$-dimensional qubit density matrices, $\rho_1$ and $\rho_2$, let $\rho_1 \odot \rho_2$ denote the Hadamard product between $\rho_1$ and $\rho_2$.
There exists a quantum algorithm to implement the unitary $e^{-\ii \rho_1 \odot \rho_2 t}$ on a $d$-dimensional qubit input state $\sigma$,
for a time $t$ to accuracy $\epsilon$ in operator norm.
\end{lemma}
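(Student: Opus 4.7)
The plan is to establish the single-step relation in Eq.~\eqref{element-wise product} by a direct Taylor expansion, and then stack $n = t/\delta$ independent steps to build the unitary $e^{-\ii(\rho_1\odot\rho_2)t}$, choosing $\delta$ so that the accumulated error is below $\epsilon$.

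First I would compute the small-time expansion of the channel $\mathcal{E}_\delta(\sigma) := \operatorname{tr}_{1,2}\bigl\{ e^{-\ii S_1\delta}(\rho_1\otimes\rho_2\otimes\sigma)e^{\ii S_1\delta}\bigr\}$. Expanding $e^{\pm \ii S_1\delta} = \mathbb{1} \pm \ii\delta S_1 + \mathcal{O}(\delta^2)$ gives
\begin{equation}
\mathcal{E}_\delta(\sigma) = \sigma - \ii\delta\,\operatorname{tr}_{1,2}\bigl\{[S_1,\rho_1\otimes\rho_2\otimes\sigma]\bigr\} + \mathcal{O}(\delta^2).
\end{equation}
Using $S_1 = \sum_{j,k}\ket{j}\!\bra{k}\otimes\ket{j}\!\bra{k}\otimes\ket{k}\!\bra{j}$ and $\operatorname{tr}(\ket{j}\!\bra{k}\rho_a) = (\rho_a)_{kj}$, the partial trace collapses the first two factors to a double sum $\sum_{j,k}(\rho_1)_{kj}(\rho_2)_{kj}\ket{k}\!\bra{j}$, which is exactly the Hadamard product $\rho_1\odot\rho_2$. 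Thus
\begin{equation}
\mathcal{E}_\delta(\sigma) = \sigma - \ii\delta\bigl[\rho_1\odot\rho_2,\sigma\bigr] + \mathcal{O}(\delta^2),
\end{equation}
which agrees with the unitary evolution $e^{-\ii(\rho_1\odot\rho_2)\delta}\sigma e^{\ii(\rho_1\odot\rho_2)\delta}$ to the same order, establishing Eq.~\eqref{element-wise product}.

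Next I would concatenate $n$ such primitive steps, each consuming one fresh copy of $\rho_1$ and of $\rho_2$, so that $\mathcal{E}_\delta^{n}(\sigma) \approx e^{-\ii(\rho_1\odot\rho_2)n\delta}\sigma e^{\ii(\rho_1\odot\rho_2)n\delta}$. Setting $n\delta = t$, the per-step error $\mathcal{O}(\delta^2)$ in trace (or operator) norm accumulates at most linearly by the triangle inequality and the contractivity of quantum channels, giving total error $\mathcal{O}(n\delta^2) = \mathcal{O}(t\delta)$. Demanding this be at most $\epsilon$ fixes $\delta = \Theta(\epsilon/t)$ and hence $n = \Theta(t^2/\epsilon)$ copies of each $\rho_a$, which is the sample count stated in the lemma. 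The efficient implementation of each primitive step reduces to simulating $e^{-\ii S_1\delta}$, and since $S_1$ is a sparse permutation-like operator (one nonzero per row/column when read in the computational basis of the three registers), Hamiltonian simulation techniques based on quantum walks~\cite{berry2012black,berry2015hamiltonian} yield an efficient circuit.

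The main obstacle I expect is the error bookkeeping across the $n$ iterations: one must show that the higher-order terms in the Taylor expansion, taken together with the mismatch between $\mathcal{E}_\delta$ and true unitary evolution under $\rho_1\odot\rho_2$, sum to at most $\epsilon$ in operator norm. The cleanest route is to bound the single-step diamond-norm distance between $\mathcal{E}_\delta$ and the ideal unitary channel by $C\delta^2$ with an explicit constant (using $\|S_1\|\le 1$ in the relevant operator norms), then invoke subadditivity of diamond norm under channel composition. A secondary subtlety is that successive steps use \emph{different} copies of $\rho_1,\rho_2$; this is actually helpful, as it keeps the composed channel a simple tensor-product action on a fresh ancilla each round, so no coherence is required between copies and the linear error accumulation goes through cleanly.
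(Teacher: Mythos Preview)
Your proposal is correct and follows essentially the same approach as the paper: expand the channel built from the modified swap $S_1$ to first order in $\delta$, identify the commutator with $\rho_1\odot\rho_2$, and then iterate $n=\Theta(t^2/\epsilon)$ times with $\delta=\Theta(\epsilon/t)$. Your treatment of the error accumulation (via diamond-norm subadditivity and channel contractivity) is in fact more careful than the paper's, which simply asserts the $\mathcal{O}(n\delta^2)=\mathcal{O}(\epsilon)$ bound without invoking these tools.
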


\begin{proof}
The usual $\text{SWAP}$ matrix employed in quantum principal component analysis~\cite{rebentrost2014quantum} is given by
\mbox{$S=\sum_{ j, k} |j\rangle \langle k|  \otimes  |k\rangle \langle  j |$}.
Here we take the modified $\text{SWAP}$ operator
$
S_1=\sum_{ j, k} |j\rangle \langle k|\otimes |j\rangle \langle k|  \otimes  |k\rangle \langle  j |.
$
With an arbitrary input state $\sigma$, the operation
\begin{equation}
{\rm tr}_{1,2} \{ e^{- \ii S_1 \delta} ( \rho_1 \otimes \rho_2 \otimes \sigma ) e^{ \ii S_1 \delta} \}
\label{eq:S1evo}
\end{equation}
can be efficiently performed with a small parameter $\delta$.
The symbol ${\rm tr}_{1,2}$ represents the trace over the subspaces of $\rho_{1}$ and $\rho_{2}$.
Expanding Eq.~\eqref{eq:S1evo} to $\mathcal{O}(\delta^2)$ leads to:
\begin{align}
{\rm tr}_{1,2} \{ e&^{- \ii S_1 \delta}( \rho_1 \otimes \rho_2 \otimes \sigma ) e^{ \ii S_1 \delta} \} \nonumber \\
= 1&-\ii \, {\rm tr}_{1,2} \{ S_1 ( \rho_1 \otimes \rho_2 \otimes \sigma ) \} \delta \nonumber \\
&+ \ii\, {\rm tr}_{1,2} \{ ( \rho_1 \otimes \rho_2 \otimes \sigma ) S_1 \}\delta \nonumber \\
&+\mathcal{O}(\delta^2).
\end{align}

Examining the first element linear in the parameter $\delta$ reveals
\begin{align}
{\rm tr}_{1,2}  &\{ S_1 ( \rho_1 \otimes \rho_2 \otimes \sigma )  \} \notag\\
&= {\rm tr}_{1,2} \{ \sum_{ j, k} |j\rangle \langle k|\otimes |j\rangle \langle k|  \otimes  |k\rangle \langle  j | ( \rho_1 \otimes \rho_2 \otimes \sigma )  \} \notag \\
&= \sum_{ n, m,j,k}\langle n|j\rangle \langle k|\rho_1 |n\rangle \langle m |j\rangle \langle k|  \rho_2 |m\rangle  |k\rangle \langle  j | \sigma  \notag \\
&= \sum_{ j,k} \langle k|\rho_1 |j \rangle  \langle k|  \rho_2 |j\rangle  |k\rangle \langle  j | \sigma  \notag \\
&=   (\rho_1 \odot \rho_2 )\, \sigma.
\end{align}

In the same manner we have
\begin{eqnarray}
{\rm tr}_{1,2}  \{  ( \rho_1 \otimes \rho_2 \otimes \sigma )S_1  \} &=& \sigma (\rho_1 \odot \rho_2 ).
\end{eqnarray}

Thus in summary, we have shown that
\begin{align}
{\rm tr}_{1,2} \{ e&^{- \ii S_1 \delta} ( \rho_1 \otimes \rho_2 \otimes \sigma ) e^{ \ii S_1 \delta} \} \nonumber\\
=&
\,\sigma -\ii [(\rho_1 \odot \rho_2 ) ,\sigma] \delta + \mathcal{O}(\delta^2).
\end{align}

\noindent The above is equivalent to applying the unitary $\exp[-\ii(\rho_1 \odot \rho_2)\delta]$ to $\sigma$ up to $\mathcal{O}(\delta)$:
\begin{align}
\exp[&-\ii(\rho_1 \odot \rho_2)\delta]\sigma\exp[\ii(\rho_1 \odot \rho_2)\delta]\nonumber\\
=&[\mathbb{1}-\ii(\rho_1 \odot \rho_2)\delta+O(\delta^2)]\,\sigma[\mathbb{1}+\ii(\rho_1 \odot \rho_2)\delta+O(\delta^2)]\nonumber\\
=&\sigma -\ii[(\rho_1 \odot \rho_2 ),\sigma]\delta +\mathcal{O}(\delta^2).
\end{align}

The above completes the derivation of Eq.~\eqref{element-wise product}. Note that if the small time parameter is taken to be \mbox{$\delta=\epsilon/t$}, and the above procedure is implemented $\mathcal{O}(t^2/\epsilon)$ times, the overall effect amounts to implementing the desired operation, $e^{-\ii\rho t}\sigma e^{\ii\rho t}$ up to an error $\mathcal{O} (\delta^2 t^2/\epsilon )= \mathcal{O} (\epsilon )$, while consuming $\mathcal{O} (t^2/\epsilon )$ copies of $\rho_1$ and $\rho_2$.
This concludes the proof of Lemma~\ref{lemma: hadamard}.
\end{proof}

\begin{lemma}\label{lemma: outer}
Given $\mathcal{O}(t^2/\epsilon)$ copies of $d$-dimensional qubit density matrices, $\rho_1$ and $\rho_2$, let $\rho_1 \oslash \rho_2$ denote the outer product between the diagonal entries of $\rho_1$ and $\rho_2$.
There exists a quantum algorithm to implement the unitary $e^{-\ii \rho_1 \oslash \rho_2 t}$ on a $d$-dimensional qubit input state, $\sigma$
for a time $t$ to accuracy $\epsilon$ in operator norm.
\end{lemma}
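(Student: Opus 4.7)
The plan is to mirror the proof of Lemma~\ref{lemma: hadamard} step by step, substituting the modified SWAP operator $S_2 = \sum_{j,k} |j\rangle\langle j| \otimes |k\rangle\langle k| \otimes |k\rangle\langle j|$ for $S_1$. The skeleton of the argument---short-time density matrix exponentiation via an efficiently simulable sparse operator, followed by $\mathcal{O}(t^2/\epsilon)$ repetitions of a small-$\delta$ step---carries over verbatim. Since $S_2$ also has only one nonzero entry per row, it admits efficient Hamiltonian simulation for small times $\delta$ by the quantum walk methods of Refs.~\cite{berry2012black,berry2015hamiltonian}, so the operation ${\rm tr}_{1,2}\{e^{-\ii S_2 \delta}(\rho_1 \otimes \rho_2 \otimes \sigma) e^{\ii S_2 \delta}\}$ can be realized on a single copy each of $\rho_1$ and $\rho_2$.

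The one genuinely new step is to verify that the first-order term in $\delta$ reproduces the diagonal outer product $\rho_1 \oslash \rho_2$ rather than the Hadamard product. Expanding the exponentials and inserting resolutions of the identity on the first two subsystems, I would compute
\begin{align*}
{\rm tr}_{1,2}&\{S_2 (\rho_1 \otimes \rho_2 \otimes \sigma)\} \\
&= \sum_{j,k,n,m}\langle n|j\rangle\langle j|\rho_1|n\rangle\langle m|k\rangle\langle k|\rho_2|m\rangle\,|k\rangle\langle j|\sigma \\
&= \sum_{j,k}\langle j|\rho_1|j\rangle\,\langle k|\rho_2|k\rangle\,|k\rangle\langle j|\sigma \\
&= (\rho_1 \oslash \rho_2)\,\sigma,
\end{align*}
and analogously ${\rm tr}_{1,2}\{(\rho_1 \otimes \rho_2 \otimes \sigma) S_2\} = \sigma(\rho_1 \oslash \rho_2)$. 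The key mechanism is that the diagonal projectors $|j\rangle\langle j|$ and $|k\rangle\langle k|$ in the first two tensor factors of $S_2$---as opposed to the off-diagonal rank-one operators $|j\rangle\langle k|$ used in $S_1$---force the partial-trace sums to collapse onto the diagonal entries of $\rho_1$ and $\rho_2$, yielding precisely the desired outer product structure on the third subsystem.

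With this identity in hand, the remainder of the argument follows the template of Lemma~\ref{lemma: hadamard} without modification. The linear-in-$\delta$ terms assemble into the commutator $-\ii[(\rho_1 \oslash \rho_2),\sigma]\delta$, which agrees with the first-order expansion of $\exp[-\ii(\rho_1 \oslash \rho_2)\delta]\sigma\exp[\ii(\rho_1 \oslash \rho_2)\delta]$ up to $\mathcal{O}(\delta^2)$, establishing Eq.~\eqref{diagonal outer product}. Setting $\delta = \epsilon/t$ and repeating $\mathcal{O}(t^2/\epsilon)$ times then implements $e^{-\ii(\rho_1 \oslash \rho_2)t}\sigma e^{\ii(\rho_1 \oslash \rho_2)t}$ to operator-norm accuracy $\epsilon$ while consuming $\mathcal{O}(t^2/\epsilon)$ copies of each $\rho_i$, by the same error accumulation used previously. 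I expect the only point requiring genuine attention to be the diagonal-collapse calculation above; everything else is routine bookkeeping inherited from Lemma~\ref{lemma: hadamard}.
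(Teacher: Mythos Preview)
Your proposal is correct and follows essentially the same approach as the paper's proof: both use the operator $S_2=\sum_{j,k}|j\rangle\langle j|\otimes|k\rangle\langle k|\otimes|k\rangle\langle j|$, reduce to the argument of Lemma~\ref{lemma: hadamard}, and invoke the same $\mathcal{O}(t^2/\epsilon)$ repetition bound. The only difference is that you spell out the partial-trace computation yielding $(\rho_1\oslash\rho_2)\sigma$ explicitly, whereas the paper simply states the result ``analogously with the proof of Lemma~\ref{lemma: hadamard}.''
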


\begin{proof}
By simply re-indexing the operator $S_1$, one obtains $S_2=\sum_{ j, k}|j\rangle \langle j|\otimes |k\rangle \langle k|  \otimes  |k\rangle \langle  j |$. Analogously with the proof of Lemma~\ref{lemma: hadamard}, we have
\begin{align}
{\rm tr}_{1,2} &\{ e^{- \ii S_2 \delta} ( \rho_1 \otimes \rho_2 \otimes \sigma ) e^{ \ii S_1 \delta} \} \nonumber\\
&=
\,\sigma -\ii [(\rho_1 \oslash \rho_2 ) ,\sigma] \delta + O(\delta^2).
\end{align}

The above can be compared with
\begin{align}
\exp[-\ii&(\rho_1 \oslash \rho_2)\delta]\,\sigma\exp[\ii(\rho_1 \oslash \rho_2)\delta]\nonumber\\
=&\,\sigma -\ii[(\rho_1 \oslash \rho_2 ),\sigma]\delta + O(\delta^2).
\end{align}

The equivalence up to the linear term in $\delta$ confirms the validity of Eq.~\eqref{diagonal outer product}.
Similarly with Lemma~\ref{lemma: hadamard}, with a $\mathcal{O}(t^2/\epsilon)$ repetition consuming $\mathcal{O}(t^2/\epsilon)$ copies of $\rho_1$ and $\rho_2$, the desired $e^{-\ii\rho t}\sigma e^{\ii\rho t}$ can be implemented up to error $\epsilon$.
\end{proof}

Given the density matrix $\rho=K^0$ which encodes the base case covariance matrix, we approximate the non-linear kernel function at $l^{th}$ layer with the order $N$ polynomial, $P^{N}_{(\odot, \oslash)}(\rho)=\sum_r^N c_r \rho ^{ (\odot, \oslash)r}$.
Here the label $(\odot, \oslash)$ indicates that we work in the setting where the types of product operation involved for taking the $r^{th}$ power of $\rho$ are arbitrary combinations of Hadamard and diagonal outer products.
Now we are in the position of presenting the main theorem required to implement the kernel function at the $l^{th}$ layer.

\begin{theorem}\label{theorem: poly}
Given $\mathcal{O}(N^2 t^2/\epsilon)$ copies of the $d$-dimensional qubit density matrix $\rho$, and the order-$N$ polynomial of Hadamard and diagonal outer products, $$P^{N}_{(\odot,\oslash)}(\rho)=\sum_r^N c_r \rho ^{ (\odot, \oslash)r},$$
there exists a quantum algorithm to implement the unitary $e^{-\ii P^{N}_{(\odot, \oslash)}(\rho) t}$ on a $d$-dimensional qubit input state $\sigma$
for a time $t$ to accuracy $\epsilon$ in operator norm.
\end{theorem}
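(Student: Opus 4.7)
The plan is to decompose the task into two stages: first, simulate each monomial $\rho^{(\odot,\oslash)r}$ appearing in the polynomial by generalizing the constructions of Lemmas~\ref{lemma: hadamard} and~\ref{lemma: outer}; second, assemble the full polynomial via a linear-combination / Lie-Trotter-Suzuki scheme, balancing the error and resource budgets across the two stages. For Stage one, given a prescribed sequence of $\odot$ and $\oslash$ factors of total length $r$, I would build an $(r+1)$-partite generalized swap operator $\tilde{S}^{(r)}$ as a sum of tensor products of the elementary projectors $|j\rangle\langle j|$, $|k\rangle\langle k|$, and $|j\rangle\langle k|$, where the choice of projectors at each tensor slot encodes whether the corresponding composition step is a Hadamard product (as in $S_1$) or a diagonal outer product (as in $S_2$), with the chain closed by $|k\rangle\langle j|$ on the target register. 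Because $\tilde{S}^{(r)}$ is $1$-sparse and row-computable, it can be simulated as a Hamiltonian by the quantum-walk methods of Refs.~\cite{berry2012black,berry2015hamiltonian}. A direct first-order expansion in a small parameter $\delta$, mirroring the index-matching computations in the earlier lemmas, would yield
\[
{\rm tr}_{1,\ldots,r}\bigl\{e^{-\ii \tilde{S}^{(r)} \delta}(\rho^{\otimes r}\otimes\sigma)e^{\ii \tilde{S}^{(r)} \delta}\bigr\} = \sigma - \ii\,[\rho^{(\odot,\oslash)r},\sigma]\,\delta + \mathcal{O}(\delta^2),
\]
so iterating $\mathcal{O}(t^2/\epsilon)$ times as in Lemmas~\ref{lemma: hadamard} and~\ref{lemma: outer} produces $e^{-\ii \rho^{(\odot,\oslash)r}t}$ to accuracy $\epsilon$ at a cost of $\mathcal{O}(r\,t^2/\epsilon)$ copies of $\rho$ per monomial.

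For Stage two, I would assemble $e^{-\ii P^{N}_{(\odot,\oslash)}(\rho)t}$ from the individual monomial exponentials via a first-order Lie-Trotter-Suzuki product formula, writing the target unitary as the $M$-fold composition of short slices $\prod_{r=1}^{N} e^{-\ii c_r\,\rho^{(\odot,\oslash)r}(t/M)}$ and using $\|\rho^{(\odot,\oslash)r}\|\le 1$ to bound the Trotter commutator error by $\mathcal{O}(N^2 t^2/M)$. Choosing $M$ and the per-slice accuracy of the monomial simulators so that the Trotter error and the accumulated monomial-simulation errors both fit within the $\epsilon$ budget, and noting that a single slice consumes $\sum_{r=1}^{N}\mathcal{O}(r)=\mathcal{O}(N^2)$ monomial invocations while the copies-per-step decrease proportionally with the slice length, the total resource count telescopes to the claimed $\mathcal{O}(N^2 t^2/\epsilon)$ copies of $\rho$.

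The main obstacle is the coupled bookkeeping in Stage two: the monomial simulators from Stage one are themselves only $\epsilon$-accurate in operator norm, so their errors must remain subadditive inside the Trotter product while the commutator error from the product formula is independently controlled, all without inflating the copy count past $\mathcal{O}(N^2 t^2/\epsilon)$. A secondary subtlety is verifying that the generalized operator $\tilde{S}^{(r)}$ really reproduces the correct ordering of $\odot$ and $\oslash$ factors after tracing out the $r$ ancillary registers; this requires extending the indexing identities used in the proofs of Lemmas~\ref{lemma: hadamard} and~\ref{lemma: outer} to arbitrary mixed sequences of diagonal and off-diagonal projectors, and checking that the diagonal projectors introduced for $\oslash$-factors do not disrupt the telescoping structure that produces the commutator form in the linear-in-$\delta$ term.
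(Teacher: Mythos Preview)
Your proposal is correct and follows essentially the same two-stage architecture as the paper: a generalized modified-swap operator $\tilde S^{(r)}$ to simulate each monomial $e^{-\ii\rho^{(\odot,\oslash)r}t}$ at cost $\mathcal{O}(r\,t^2/\epsilon)$ copies, followed by a Lie--Trotter product formula to sum the monomials, yielding the overall $\mathcal{O}(N^2 t^2/\epsilon)$ count. The paper resolves the secondary subtlety you flag about $\oslash$-factors by giving an explicit recursive rule: appending an $\odot$ inserts one extra $|j\rangle\langle k|$ slot before the closing $|k\rangle\langle j|$, whereas appending an $\oslash$ \emph{replaces all previous slots} by $(|j\rangle\langle j|)^{\otimes r}$ and inserts a single $|k\rangle\langle k|$, since only the diagonal of the accumulated product survives; this is slightly different from your ``slot-by-slot encoding'' description but confirms the index-matching still telescopes. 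Your Stage-two error bookkeeping is actually more explicit than the paper's, which simply invokes the Lie product formula without separately balancing the Trotter and per-slice errors.
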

\begin{proof}
We first address how to implement the unitary $e^{-\ii \rho ^{ (\odot, \oslash)r } t}$.
Intuitively, this can be achieved by constructing a generalized $\tilde{S}$ operator with tensor product components of $|j\rangle \langle j|$, $|j\rangle \langle k|$, $|k\rangle \langle k|$ and $|k\rangle \langle j|$, corresponding to the contributing elements in the matrices in each term.
We give a recursive procedure to determine $\tilde{S}$:

In the case of $r=2$, we have already shown in Lemma~\ref{lemma: hadamard} and Lemma~\ref{lemma: outer} the desired operation can be achieved using $S_1$ and $S_2$ corresponding to the $\odot$ and $\oslash$ cases respectively.
Thus we can write the base case of the recursive procedure as
$$\tilde{S}^{(r=2)}= \sum_{j,k} T^{(2)}(j,k)\otimes |k\rangle \langle j|,$$ where $T^{(2)}(j,k)$ denotes the possible combinations of tensor products, $|j\rangle \langle k|\otimes |j\rangle \langle k|$ or $|j\rangle \langle j|\otimes |k\rangle \langle k|$.
Now consider the $r=3$ case, the additional factor of $\rho$ will come in two possible cases. If it comes as a $\odot$ product, the updated operator $\tilde{S}^{(r=3)}_\odot$ is simply given by
$$
\tilde{S}^{(r=3)}_\odot = \sum_{j,k} T^{(2)}(j,k)\otimes |j\rangle \langle k| \otimes |k\rangle \langle j|.
$$

If the additional $\rho$ comes in as a $\oslash$ product, the updated operator $\tilde{S}^{(r=3)}_\oslash$ is instead given by

$$
\tilde{S}^{(r=3)}_\oslash = \sum_{j,k} |j\rangle \langle j| \otimes |j\rangle \langle j|  \otimes |k\rangle \langle k| \otimes |k\rangle \langle j|.
$$

This can be seen by observing that the contributing elements to a $\oslash$ product are exclusively diagonal, which we use $|j\rangle \langle j|$ to pick up.
Any off-diagonal information about the previous element-wise product operations is irrelevant. In general, if we have the $r^{th}$ order $\tilde{S}$ operator given by

$$
\tilde{S}^{(r)} = \sum_{j,k} T^{(r)}(j,k)\otimes |k\rangle \langle j|,
$$
the operators $\tilde{S}^{(r+1)}_\odot$ and $\tilde{S}^{(r+1)}_\oslash$ can be generated as follows:

\begin{align}
\tilde{S}^{(r+1)}_\odot =& \sum_{j,k} T^{(r)}(j,k)\otimes |j\rangle \langle k| \otimes |k\rangle \langle j|,    \nonumber \\
\tilde{S}^{(r+1)}_\oslash =& \sum_{j,k} (|j\rangle \langle j|)^{\otimes r}  \otimes |k\rangle \langle k| \otimes |k\rangle \langle j|.
\end{align}

We have shown a recursive procedure to construct $\tilde S^{(r)}$ up to $r=N$ such that
\begin{align}
{\rm tr}_{1...r} &\{ e^{- \ii \tilde S^{(r)} \delta} ( \rho^{\otimes r} \otimes \sigma ) e^{ \ii \tilde S^{(r)} \delta} \} \notag\\
=&\exp[-\ii \rho ^{ (\odot, \oslash)r }\delta]\,\sigma\exp[\ii \rho ^{ (\odot, \oslash)r }\delta]+ \mathcal{O}(\delta^2),
\end{align}
for a small evolution $\delta$. Analogously with Lemma~\ref{lemma: hadamard} and Lemma~\ref{lemma: outer}, with a $\mathcal{O}(t^2/\epsilon)$ repetition consuming $\mathcal{O}(rt^2/\epsilon)$ copies of $\rho$, the desired $$\exp[-\ii \rho ^{ (\odot, \oslash)r }t]\,\sigma\exp[\ii \rho ^{ (\odot, \oslash)r }t]$$ can be implemented up to an $\epsilon$ error.

Finally one makes use of the Lie product formula for summing the terms in the polynomial     \cite{suzuki1992general,childs2003exponential,wiebe2010higher}:
\begin{eqnarray}
e^{\ii\delta ( A + B)+ \mathcal{O} (\delta ^2/m)}=(e^{\ii\delta A/m}e^{\ii \delta B/m})^m,
\end{eqnarray}
where $A$ and $B$ are taken to different terms in $P^{N}_{(\odot,\oslash)}(\rho)=\sum_r^N c_r \rho ^{ (\odot, \oslash)r}$, and the factors $c_r$ simply amount to multiplying the $S^{(r)}$ matrices with the respective coefficients.
The parameter $m$ can be chosen further suppress the error by repeating the entire procedure.
However, for the purpose of implementing $$e^{-\ii P^{N}_{(\odot, \oslash)}(\rho) t}\sigma e^{\ii P^{N}_{(\odot, \oslash)}(\rho) t} $$ to our desired accuracy $\epsilon$, $\mathcal{O}(N^2 t^2/\epsilon)$ copies of $\rho$ are required.
The quadratic dependency in the order of the polynomial, $N^2$, stems from implementing the unitary $\exp[-\ii \rho ^{ (\odot, \oslash)r }t]$ up to $r=N$, each consuming $\mathcal{O}(Nt^2/\epsilon)$ copies as argued before.
\end{proof}

\section{Experiments}
The central part of the algorithm described in Section~\ref{sec:algorithm} is the intricate quantum protocol of matrix inversion for computing the predictors in Eqs.~\eqref{eq:meanpred},~\eqref{eq:variancepred}.
This protocol~\cite{Harrow2009a} is probabilistic, meaning that it only succeeds conditioned on obtaining specific results after measuring specific qubits in the protocol.
Therefore, it is not assured that the protocol will succeed in a particular run, and it has to be repeatedly performed until it succeeds in obtaining the correct solution.
Moreover, computations on real quantum computers are subject to imprecisions in the gates applied to the qubits, readout errors and losses of coherence in the state of the system.

Therefore, when thinking about a realistic application of the quantum Bayesian algorithm, the important questions to ask are how experimentally feasible it is, and how far we are from running it on real quantum computers.
With this goal in mind, we have performed two sets of experiments: on the one hand, we have run simulations of the quantum matrix inversion protocol on two different quantum virtual machines with various noise models that affect real quantum computers, and analyzed their impact on the output---the final quantum state after the protocol---of the algorithm.
On the other hand, we have run scaled-down versions of the protocol on two real, state-of-the-art quantum processing units to gauge how far we are from implementations of practical relevance.

We have implemented the complete quantum matrix inversion protocol in the Rigetti Forest API using PyQuil and Grove~\cite{smith2016practical}. This implementation can perform approximate eigenvalue inversion on a Hermitian matrix of arbitrary size.
The PyQuil framework has advanced gate decomposition features and provides a way to perform arbitrary unitary operations on a multi-qubit quantum state.
Furthermore, Rigetti's classical simulator of quantum circuits (referred to as a \emph{quantum virtual machine}) provides a variety of noise models that can affect computations in real quantum architectures, allowing a detailed analysis of how noise affects accuracy and computational overhead.

In addition, we have implemented reduced, $2\times 2$ matrix inversion problems in both PyQuil---to be run in Rigetti's Quantum Processing Unit---and in IBM's QISKit software stack~\cite{cross2017open}---to be run in IBM's Quantum Experience computers---.
QISKit also provides a noisy classical simulator, of which we also make use to contrast the performance of the quantum matrix inversion algorithm run in the real QPUs against simulations with realistic noise models.

The quantum processing units employed in the experiments are IBM's 16-qubit Rueschlikon \mbox{(IBMQX5)}~\cite{wang2018ibm} and Rigetti's 8-qubit 8Q-Agave.
While the number of available physical qubits is in both cases higher than the number of qubits required for the implementation (a total of six for the $2\times 2$ reduced version), the depth of the circuit is much higher for larger matrices, and the current noise levels in the QPUs would not allow obtaining meaningful results when inverting larger examples.

\subsection{Simulations of algorithm success on a quantum virtual machine}
\begin{figure*}[t]
    \centering
    \subfigure[]{
        \includegraphics[width=0.48\textwidth]{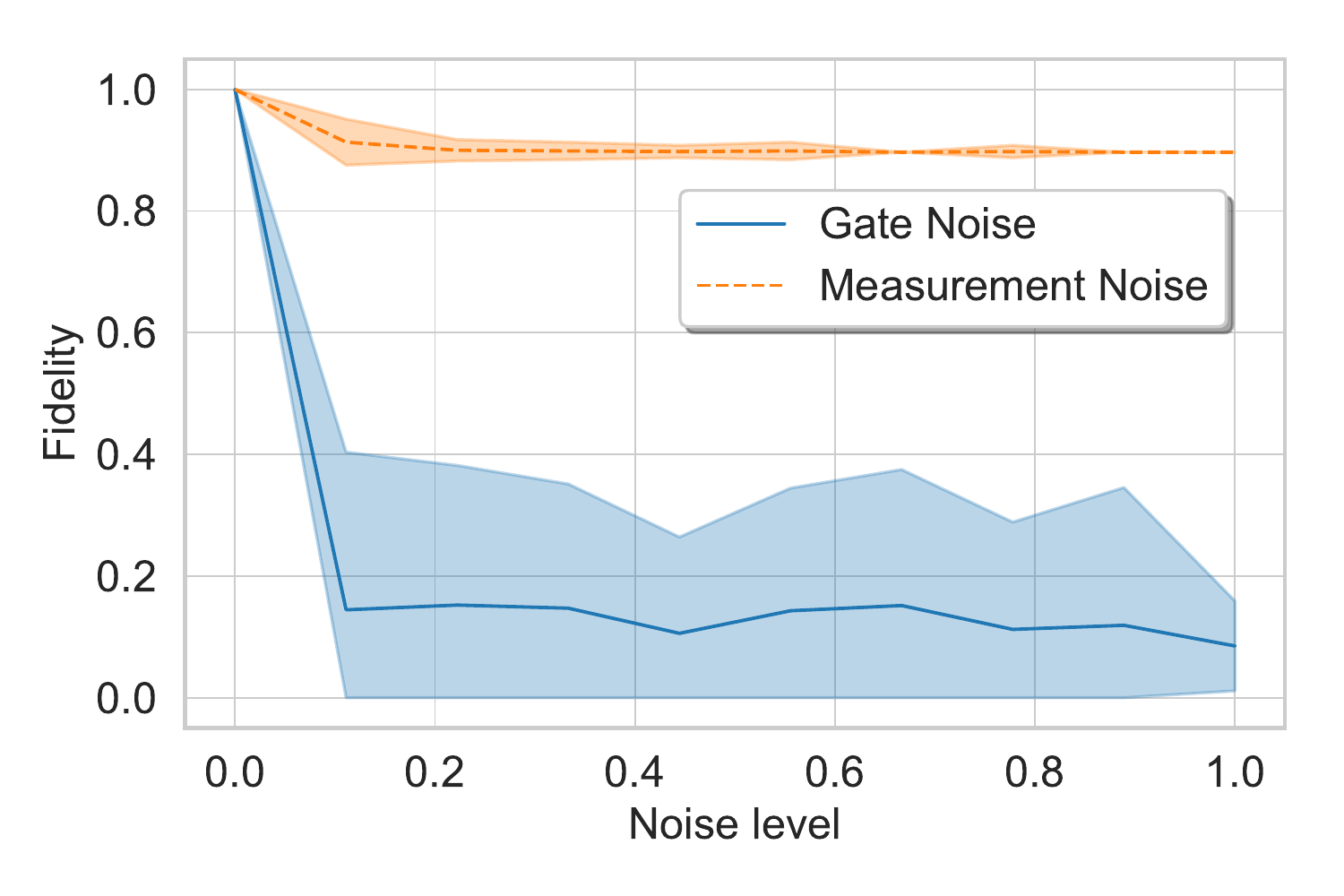}
    }
    \subfigure[]{
        \includegraphics[width=0.48\textwidth]{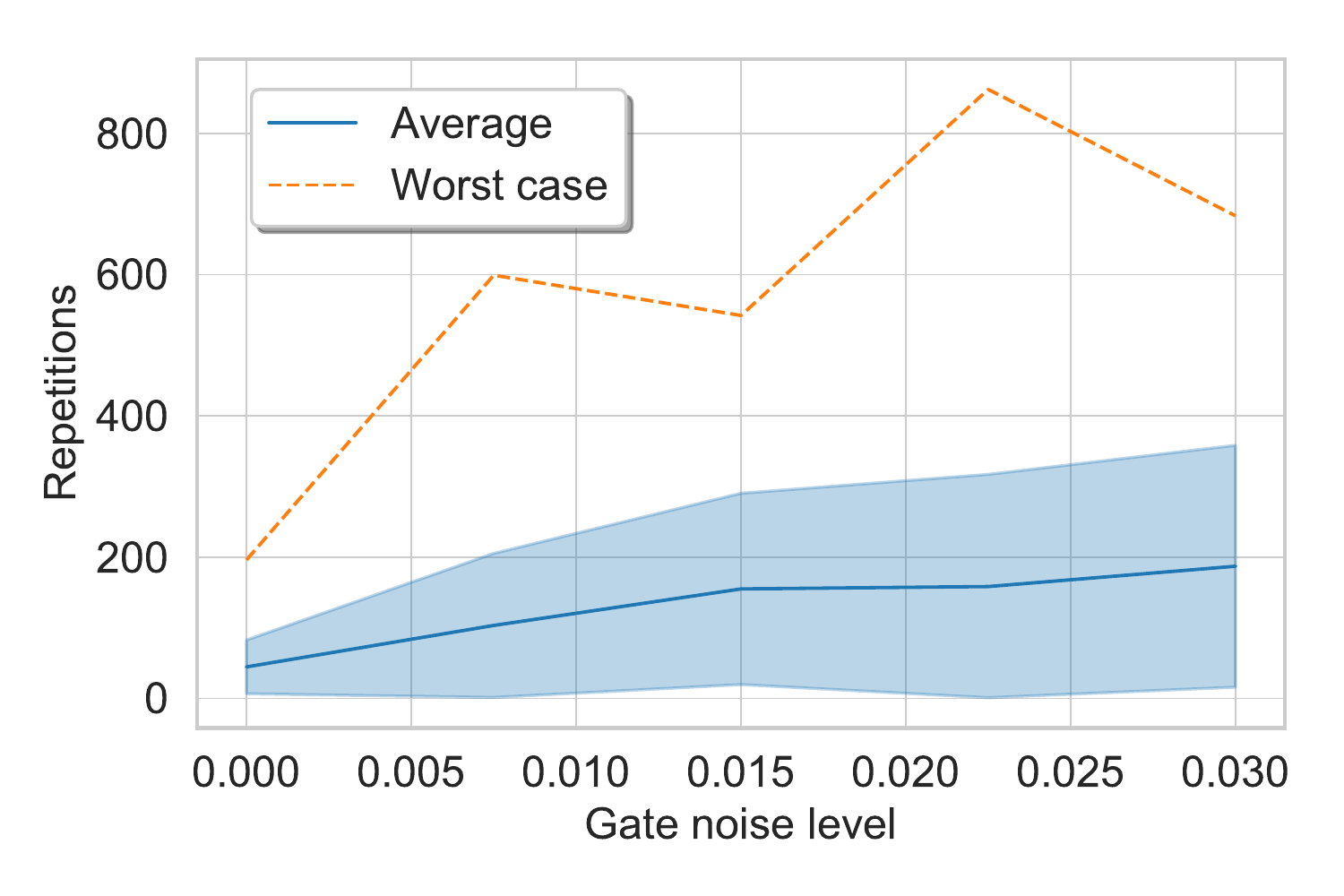}
    }
    \caption{Simulated gate and measurement noise on a specialized circuit for inverting the $2\times 2$ matrix $A$, run in Rigetti's quantum virtual machine. (a) The fidelity shows the overlap with the expected correct state after the computation. A fidelity of zero means that the output state (and hence the result of the computation) is completely orthogonal to the correct solution, while a fidelity of one means that the output state coincides with the expected one. (b) The number of repetitions expresses the average of how many times the probabilistic program is executed before it succeeds. We define a successful run with two conditions: the qubit in which the controlled rotation is performed is in the state $\ket{1}$ after its measurement, and the final state of the qubits has a fidelity greater than $0.9$ with the expected outcome of an ideal run of the protocol. The dashed line represents the maximum number of runs observed in the simulations before having a successful one.}
    \label{2by2simulated}
\end{figure*}

In this section, we report the results of the simulations conducted in Rigetti's quantum virtual machine.
We have conducted two sets of experiments to analyze the sensitivity of the protocol to different noise types that appear in real quantum computers.
In the first, we restrict ourselves to the simplest possible scenario of inverting the $2\times 2$ matrix $A=\frac{1}{2}\begin{pmatrix} 3 & 1 \\ 1 & 3 \end{pmatrix}$ with the problem-specific circuit in Ref.~\cite{cao2012quantum}. This circuit is much shallower than the full protocol detailed in Ref.~\cite{cao2013quantum}, making it more realistic to implement on current and near-future quantum computers due to its reduced depth.
The second case is the complete implementation of the full quantum matrix inversion protocol~\cite{Harrow2009a,cao2013quantum}.
This version requires a large number of ancilla qubits to perform the calculations, in particular for the computation of the reciprocals of the eigenvalues. We choose to simulate the inversion of a $4\times 4$ matrix with four bits of precision, which is the largest example that could fit on the largest Rigetti QPU.

We have studied the impact of two noise models, both being instances of parametric depolarizing noise.
The first one, known as \emph{gate noise}, applies a Pauli $X$ operator---which swaps the states $|0\rangle$ and $|1\rangle$ of the qubit it acts upon---with a certain probability on each qubit after \emph{every} gate application.
The probability of application of the operator indicates the noise level.
The second type of noise that we study is known as \emph{measurement noise}.
In this case, a Pauli $X$ operator is applied with certain probability only on every qubit that is measured, before the measurement takes place.
Therefore, it can also be understood as a readout error that, with a certain probability, instead of recording the result of a measurement, $y$, it records $NOT(y)$.

The circuits we implement have a much larger number of gates ($\sim$20 for the $2\times 2$ reduced version, increasing for the increasing size of the matrix being inverted) than measurements (just one, that which certifies the success of the eigenvalue inversion).
This is the reason why in all the experiments we run we observe that the gate noise has a stronger impact on the results than the measurement noise.

In Fig.~\ref{2by2simulated} we show the results for the inversion of the $2\times 2$ matrix $A$.
We analyze the two critical factors of the protocol, namely how different are the expected result of the protocol and the output from the simulator when we know that the inversion has succeeded, and how many times it is needed to run the protocol in order to obtain a successful run.
As expected, the measurement noise has a much smaller impact in the result than the gate noise, which for reasonably low noise levels already renders the output state (and hence the result of the inversion) with low overlap with the expected result.

\begin{figure*}[ht!]
    \centering
    \subfigure[]{
        \includegraphics[width=0.48\textwidth]{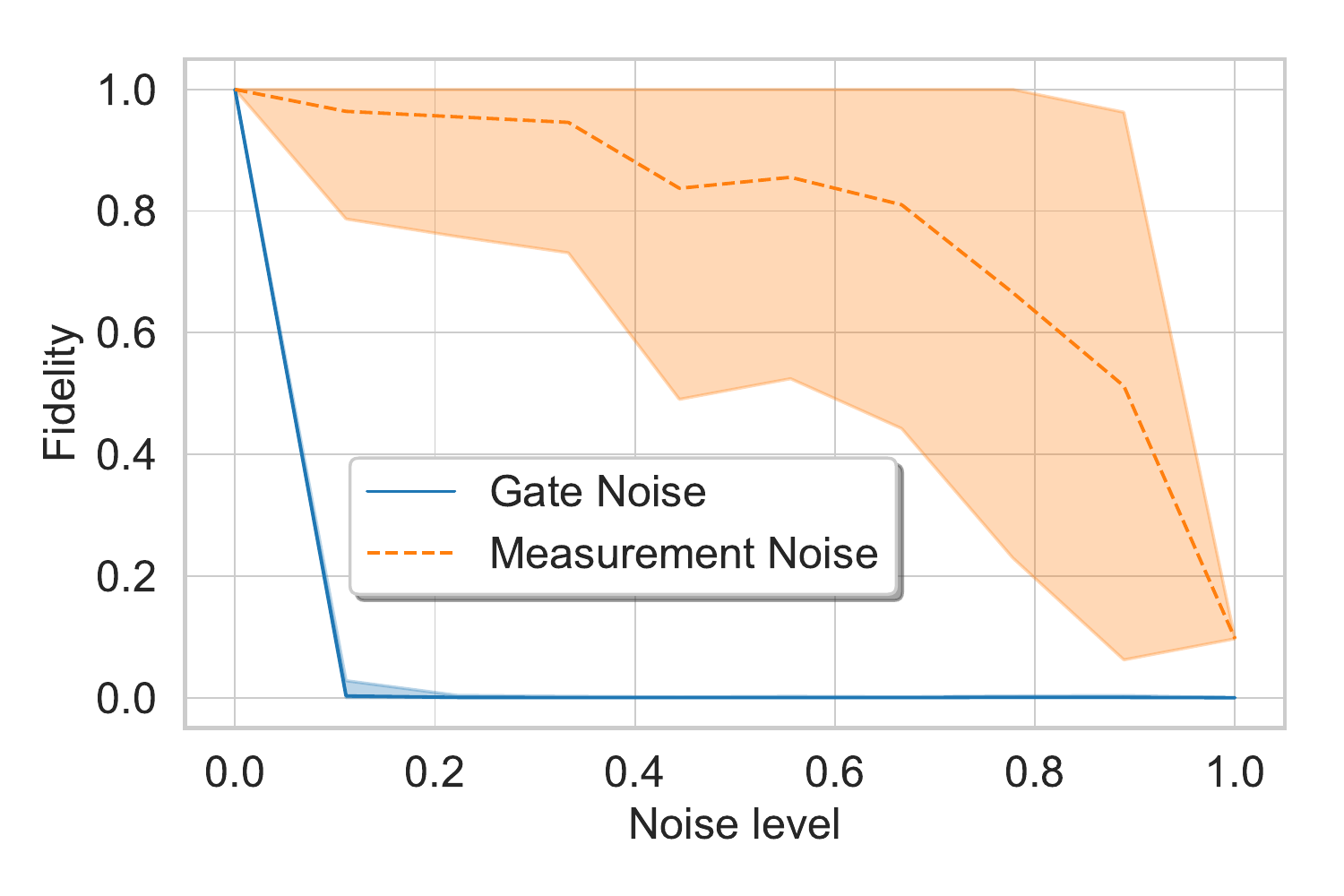}
    }
    \subfigure[]{
        \includegraphics[width=0.48\textwidth]{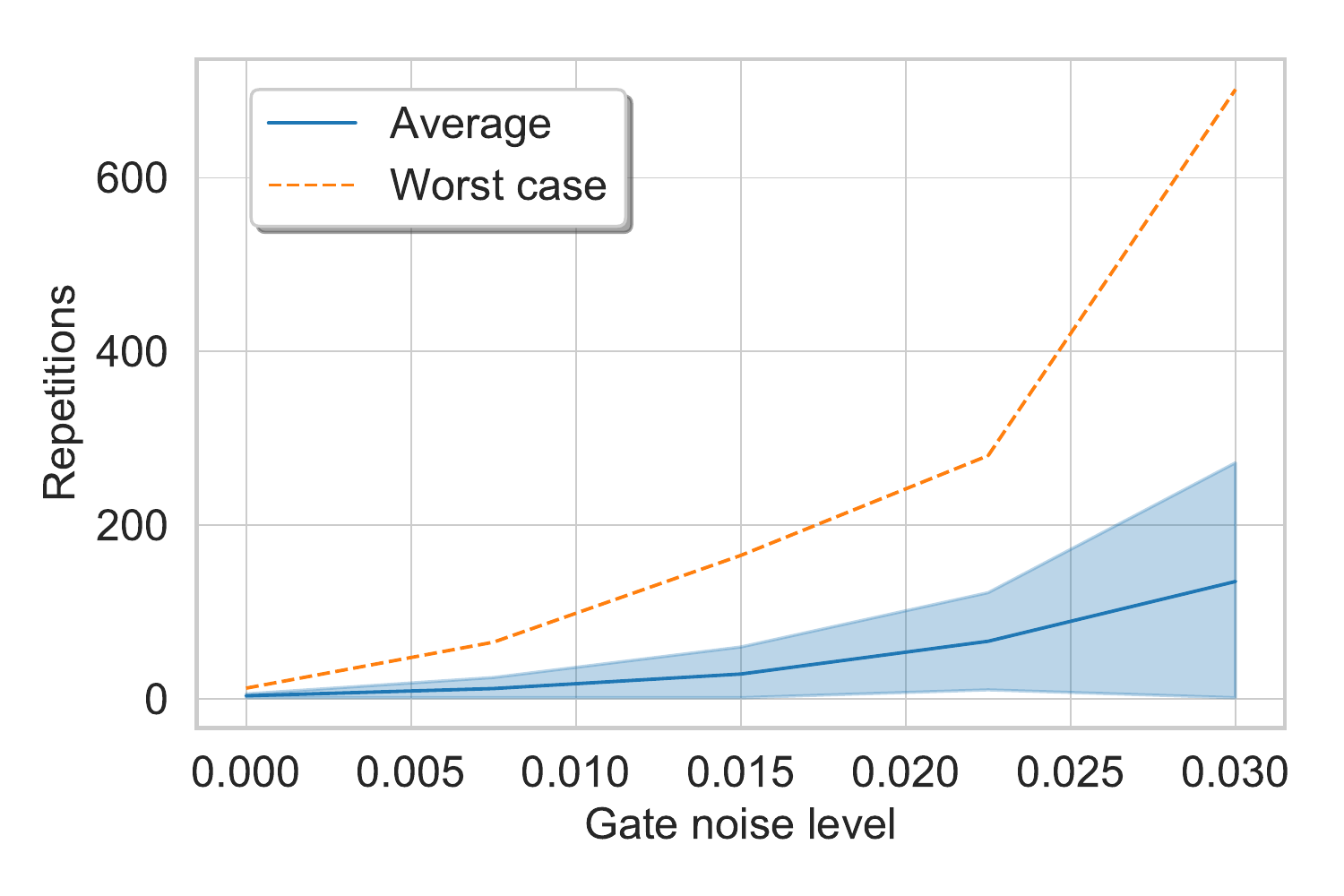}
    }
    \caption{Simulated gate and measurement noise on the generic circuit for inverting a matrix. The matrix in the benchmark was $4\times 4$, and the eigenvalues were represented by four bits of precision. Together with the ancilla qubits in the calculations, this is the largest system that can be simulated with less than 19 qubits, which is the size of Rigetti's largest QPU (19Q-Acorn).}
    \label{4by4simulated}
\end{figure*}

The number of repetitions needed for the algorithm to succeed, understood as the average number of times the algorithm must be run in order to obtain the outcome associated to the state $\ket{1}$ when measuring the qubit to which the conditional rotation is applied, is a fragile quantity that, on its own, does not provide meaningful insights when dealing with noise.
In the case of measurement noise, an error in the measurement either discards a successful run of the algorithm or accepts as successful a failed run, deeming further computations useless.
In the case of gate noise, even in the case the measurement succeeds and therefore the state of the flag qubit is $\ket{1}$, the remaining computations on the other qubits may lead to a final state that deviates from the expected result.

In order to obtain a good estimation of the number of runs needed to detect a final state that encodes the desired solution, in Fig.~\ref{2by2simulated}(b) we show the number of repetitions of the algorithm needed in order to have a successful run according to the flag qubit (i.e., that its state after the measurement is $\ket{1}$~\cite{Harrow2009a}), in which the overlap of the final state and the desired state is higher than a specific value.
We measure such an overlap with the fidelity, given by $\mathcal{F}=\left|\braket{\psi_\text{real}}{\psi_\text{ideal}}\right|^2$, where $\ket{\psi_\text{real}}$ and $\ket{\psi_\text{ideal}}$ determine the state of the qubits after a noisy simulation and a noiseless successful run, respectively.

Given that the protocol is probabilistic, the number of repetitions needed to have a successful run is dependent on the actual matrix to be inverted even in the case of a noiseless run, as can be observed by comparing Figs.~\ref{2by2simulated}(b) and~\ref{4by4simulated}(b), and grows fast with the gate noise level in the qubits.
It is important to track not only the average behavior of the protocol (in solid blue in Fig.~\ref{2by2simulated}(b)), but also worst-case scenarios (in dashed orange) where the protocol must be run up to more than five times the average in order to have a successful execution.
Nevertheless, worst-case performance scales with the noise level in a similar way as the average performance.

In Fig.~\ref{4by4simulated} we perform the same studies for the implementation of the general algorithm inverting a random $4\times 4$ matrix.
It is immediately apparent that increasing the circuit depth makes the protocol more sensitive to noise, and the fidelity drops to zero with lower variance in the case of the gate noise.
However, the noise level for which the fidelity of the output of the circuit with the expected state drops abruptly is approximately equal in both the $2\times 2$ and $4\times 4$ cases, and it would be interesting to see whether it remains constant for larger problems.
We still observe better robustness to measurement noise, but the impact of this kind of noise in the resulting state is stronger than in the problem-specific algorithm of Fig.~\ref{2by2simulated}.
The number of repetitions for a successful run now has a non-linear behavior with the level of gate noise in the simulation, although the ratio of the worst-case scenario to the average is the same as in the case of inverting the $2\times 2$ matrix.

\subsection{Evaluation on quantum processing units}
\label{actualqpu}
In this section, we implement the restricted $2\times 2$-matrix inversion algorithm in two real quantum computing architectures.
The reason of choosing the restricted algorithm is that current quantum computers have a small number of qubits, limited qubit-qubit connectivity, and most importantly, short coherence times, which implies that only shallow quantum circuits can be implemented.
The restricted algorithm be implemented with a much simpler circuit than the general one, resulting in about 20 gates for the full protocol~\cite{cao2012quantum}.

\begin{figure*}[ht!]
    \centering
        \includegraphics[width=0.8\textwidth]{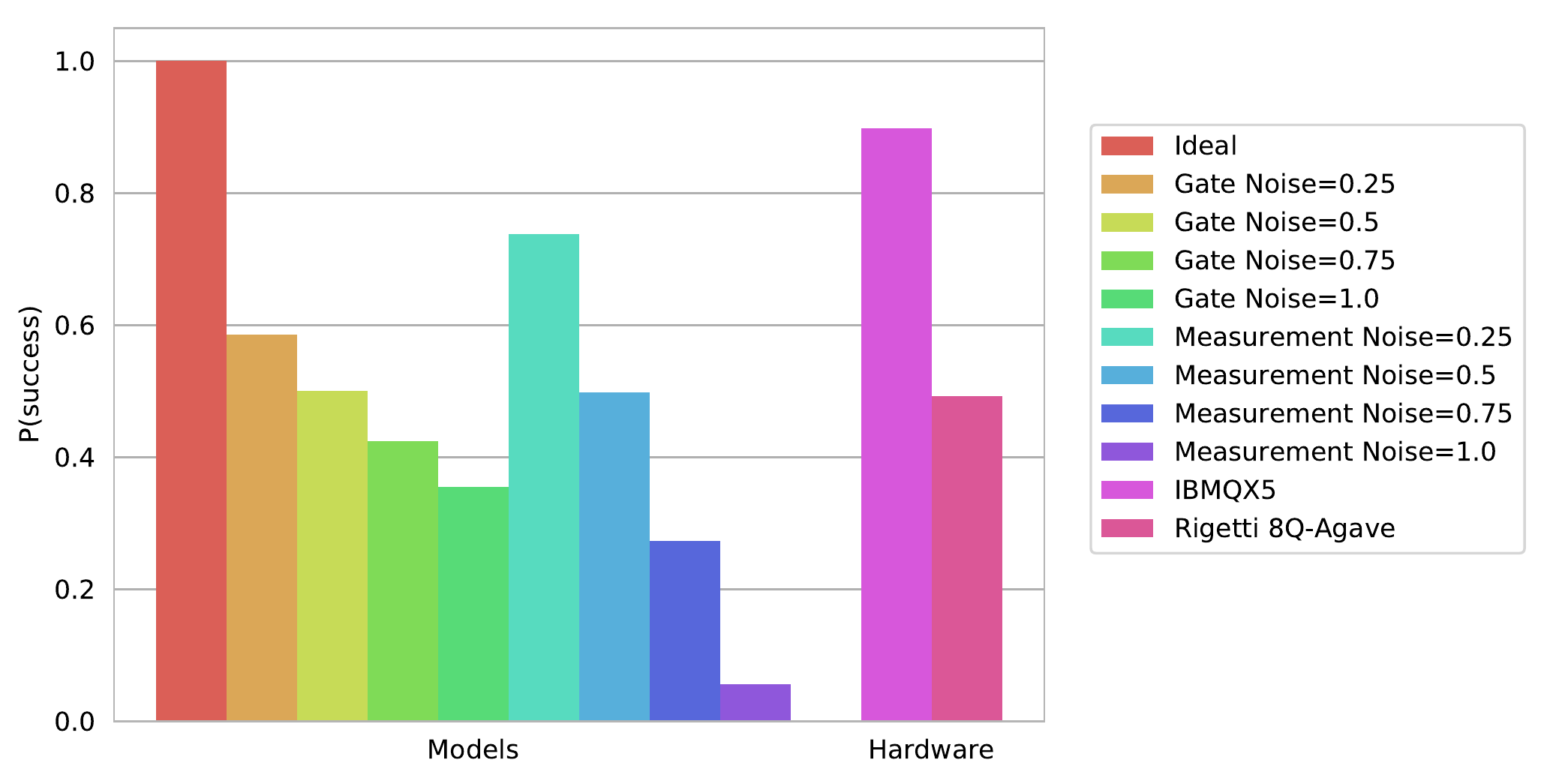}
    \caption{Probability of success of the swap test (i.e., the probability of the circuit result being the desired) after success in the eigenvalue inversion subroutine, for different classical noisy simulations, and executions on the IBM and Rigetti quantum processing units (rightmost bars). The noise models involve faulty gate operations---gate noise---and faulty readout errors---measurement noise---, with different probabilities of failure. The algorithm is run 8192 times for each instance, after which $P(success)$ is computed.}
    \label{qpus}
\end{figure*}

In the case of runs on real QPUs one does not have direct access to the whole output state of the circuit, but only to samples of measurements on it.
This makes it difficult to compute the fidelity with the expected state, and instead, we perform a \emph{swap test}~\cite{gottesman2001quantum}.
The test runs as follows: the expected result of the algorithm is encoded in auxiliary qubits, and after operations between the output and the expected result a flag qubit indicates whether both states are equal, in which case the state of the flag qubit is $|0\rangle$, or not, in which case the state is $|1\rangle$.
The figure of merit is now the probability of success in the test $P(success)=P(0)$, which can then be related to the fidelity by the expression $\mathcal{F}=|2P(success)-1|$.
Note that this success probability is different from the probability that the eigenvalue inversion subroutine succeeds, which is the quantity that has already been studied in Figs.~\ref{2by2simulated}(b) and~\ref{4by4simulated}(b).

We have implemented the protocol to be run in both Rigetti's 8Q-Agave and IBM's IBMQX5 quantum processing units.
The IBM QISKit software~\cite{cross2017open} also provides a classical simulator to run noisy experiments, and we use these to benchmark the performance of the runs on the real chips.
The results of the experiments can be found in Fig.~\ref{qpus}.

As in the case of the simulations in Rigetti's software stack, the measurement noise produces a smaller impact in the protocol than the gate noise.
Note that the qubit that encodes the success of the swap test is also subject to readout error when simulating measurement noise.
Therefore, for large measurement noise levels, the fact that $P(success) = P(0)$ is very low means that the actual state of the flag qubit is $|0\rangle$ (i.e., the protocol has succeeded, and the output state is the desired one), but due to the noise the result that is recorded after measuring is $1$.

Gate noise has a stronger impact in the final state.
This kind of error, unlike the measurement noise, does affect the computations in the circuit, so lower success probabilities now represent a real discrepancy between the output and desired states.
In this case, the success probabilities lie in the range of $[0.35, 0.6]$, which translates into fidelities in the range of $[0, 0.3]$.

Turning to executions in the real QPUs, the probability of protocol success is higher in IBMQX5.
This is mostly due to its improved coherence time\footnote{Information about performance measures of Rigetti's QPUs can be found in \href{http://docs.rigetti.com/en/1.9/qpu.html}{http://docs.rigetti.com/en/1.9/qpu.html}}\textsuperscript{,}\footnote{Information about performance measures of IBM's QPUs can be found in \href{http://www.research.ibm.com/ibm-q/technology/devices/}{http://www.research.ibm.com/ibm-q/technology/devices/}}, that allows keeping the state in the circuit better isolated from external perturbations during computation.
The probability of protocol success is $89\%$, which translates into a fidelity with the expected state of $0.78$. This is a very encouraging result, despite the size of the matrix inverted.
In contrast, the fidelity when the protocol is run in 8Q-Agave is close to zero, which means that all the information about the computation is lost during the process.
This is mainly due to the circuit depth being too large to maintain the quantum state isolated enough from the environment.

\section{Conclusions}
As quantum computers become available and continue improving in scale and noise tolerance, it is an exciting question to ask whether they can make a qualitative difference in machine learning applications.
Seminal works that explored this question focused on idealized, fully noise tolerant, large-scale quantum computers, and implemented simple machine learning algorithms like support vector machines and nearest-neighbor clustering.
However, an important fact is that for at least the next decade quantum computers will remain limited in scale and noise tolerance, and we must factor this in when we construct quantum-enhanced algorithms.
Furthermore, simple machine learning methods are already efficiently executed on classical hardware, so there is no need for the use of quantum algorithms in this case.

In this work, we studied a complex, Bayesian approach to deep architectures that is difficult to perform on digital hardware. We developed a quantum algorithm for learning Gaussian processes that can be applied layer by layer for training arbitrarily deep neural networks.
Furthermore, our protocol is a classical-quantum hybrid that largely removes the currently unrealistic technological requirements, such as a quantum random access memory.
The algorithm makes use of the quantum matrix inversion protocol which, albeit intricate, its mathematical assumptions are fulfilled by the kernel matrices originating from Gaussian processes.
In order to analyze the feasibility of a real use of the algorithm, we implemented its core routine, the quantum matrix inversion protocol, to be run in both quantum simulators and real state-of-the-art quantum processors. We observe that the accuracy of the protocol sharply drops with noise, but even with current, small quantum computers, high success rates can be achieved.

Although promising, these experimental results do not completely prove that the full protocol will be efficiently implementable in near-term quantum technologies. Full implementation in architectures with limited coherence time and sparse connectivity, as well as a fully-coherent variant of the training algorithm (which would have important applications in quantum simulations and quantum control), are interesting avenues for future research.

Not only are commercial quantum computers proliferating, but also the tools to program them, and, even more importantly, the collection of high-level algorithmic primitives~\cite{coles2018quantum}.
This enables machine learning researchers to leverage quantum technologies without the need of having an extensive background in quantum technologies.
Just as GPUs and efficient frameworks like TensorFlow~\cite{abadi2016tensorflow} and PyTorch~\cite{paszke2017pytorch} created an enormous community researching and deploying deep learning, we expect the same phenomenon will happen in the future with quantum processing units and collections of quantum algorithms.

\begin{acknowledgements}
We would like to thank Piotr Gawron (Polish Academy of Sciences), Will Zeng and Ryan Karle (Rigetti Computing), and Joseph Fitzsimons (SUTD and CQT) for discussions.
Z. Z. acknowledges support from Singapore's Ministry of Education and National Research Foundation under NRF Award NRF-NRFF2013-01.
The work of \mbox{A. P.-K.} is supported by Fundaci\'on Obra \mbox{Social} \mbox{``la Caixa''} (LCF/BQ/ES15/10360001), the Spanish MINECO (QIBEQI FIS2016-80773-P and Severo Ochoa SEV-2015-0522), Fundaci\'o Privada Cellex, and the Generalitat de Catalunya (SGR1381 and CERCA Program).
This research was supported by Perimeter Institute for Theoretical Physics. Research at Perimeter Institute is supported by the Government of Canada through Industry Canada and by the Province of Ontario through the Ministry of Economic Development and Innovation.
\end{acknowledgements}

\bibliography{bibliography}

\end{document}